\newcommand{\norm}[2]{\|#1\|_#2} 
\newcommand{\stconv}{\star_{s,t}} 
\newcommand{\bll}{ {\bm \lambda} }
\newcommand\eqref[1]{(\ref{#1})}
\newcommand{\mani}{L^2(\Omega \times \left[0,T\right])}
\newcommand{\sball}{{\mathbb B}_{(c\alpha,c\beta)}(|{\bf r}|)}
\newcommand{\tball}{{\mathbb H}_{(\alpha,\beta)}({t})}
\newcommand{\sballi}{{\mathbb B}_{(c\alpha,c\beta)}(|{\bf r}_0-{\bf r}'|)}
\newcommand{\tballi}{{\mathbb H}_{(\alpha,\beta)}({t_0-t'})}
\newcommand{\tmf}[1]{\tilde{#1}(\omega)}
\newcommand{\spf}[1]{\tilde{#1}({\bm \lambda})}
\title{On late-time Stability of Time Domain Integral Equations for Electromagnetics \thanks{This work was supported by the National Science Foundation through grants CCF:1018576 and CMMI:1250261 . A. J. Pray was also supported by the DoD SMART program under grant\# N00244-09-1-0081}}
\author{N.~V. Nair \thanks{Department of Electrical and Computer Engineering, Michigan State University, East Lansing, MI 48824 ({\tt nairn@msu.edu}).} \and A.~J. Pray \thanks{Department of Electrical and Computer Engineering, Michigan State University, East Lansing, MI 48824 ({\tt prayandr@msu.edu}).}
\and B. Shanker\thanks{Department of Electrical and Computer Engineering, Michigan State University, East Lansing, MI 48824 ({\tt bshanker@egr.msu.edu}).}}
\begin{document}

\maketitle

\begin{abstract}
The problem of late time instability in time domain integral equations for electromagnetics is longstanding. While several techniques have been suggested for addressing this problem, they either require impractically high degrees of freedom in the basis function or an analytical computation of matrix elements. The authors recently proposed a method that demonstrates stability without requiring either of these. The paper, however, does not present theoretical foundations for the choice of, or a rigorous error bounds on the approximation of,  the bilinear form. This paper complements the authors' previous work by presenting a construction of the bilinear form based on the minimization of the energy in the system and a proof for the bounds on the approximation. We present results on the bounds developed and few sample scattering results that demonstrate the stability of the proposed scheme. \end{abstract}

\begin{keywords} 
Time domain integral equations, stability, numerical integration
\end{keywords}

\begin{AMS}
15A15, 15A09, 15A23
\end{AMS}

\pagestyle{myheadings}
\thispagestyle{plain}
\markboth{NAIR, PRAY AND SHANKER}{STABILITY OF TIME DOMAIN INTEGRAL EQUATIONS}

\section{Introduction and examples}
While time domain integral equation (TDIE) methods have been used to solve electromagnetic scattering problems for a few decades now,  instability has been a significant debilitating factor, limiting its practical applications. Several methods have been proposed to address this issue mathematically, from initial work in acoustics \cite{Ha-Duong1987,Ha-Duong2003,Ha-Duong2003a}, to more recent work in 2-D \cite{Pujols1991} and 3-D \cite{Terrasse1993,Bachelot1995,Bachelot1995a} electromagnetics.  While these methods provide a mechanism to construct solutions that are stable for long simulation times \cite{Ha-Duong2003a}, they require high degrees of smoothness in the basis function for the proofs to be valid \cite{Bachelot1995}.  

Concurrent with the developments in the mathematical framework, there has also been several numerical attempts to solve the problem, viz. obtain stable solutions to transient electromagnetic scattering using TDIEs \cite{Ha-Duong1987,Ha-Duong2003,Ha-Duong2003a,Shanker2009,Shi2011}. While these techniques have been quite successful in generating stable solutions for a variety of challenging geometries, they rely on analytical computation of the matrix elements. However, this precludes the use of either acceleration methodologies or higher order geometry representations, both of which are essential for application to realistic geometries. 

Recently, the authors developed a method \cite{Pray2012} that relies on a separable expansion of the convolution with the retarded potential Green's function. This scheme approximates the convolution of the Green's function with the temporal basis function in any given closed domain using a summation of smooth polynomial functions and has been numerically shown to generate stable solutions for a large class of geometries that could previously {\em only} be stabilized via the exact integration techniques. While this method does not suffer from either of the disadvantages of the schemes developed in \cite{Shanker2009,Shi2011} and provided ample empirical evidence of utility, \cite{Pray2012} did not provide a rigorous mechanism for truncating the separable expansion. The present work answers this need.

This paper aims to provide a theoretical footing to the scheme developed in \cite{Pray2012,Pray2012a}. To this end we will (1) construct a bilinear form for the electric field integral equation for time domain electromagnetics whose solution leads to a minimization of the energy in the system and (2) provide a rigorous mechanism to truncate the polynomial expansion developed in \cite{Pray2012,Pray2012a}. While we will provide some numerical results that demonstrate the validity of our approximations and the stability of the scheme developed, we will defer to \cite{Pray2012} for more detailed results on stability.

The rest of this paper will proceed as follows. In Section \ref{sec:prob}, we formally state the problem following which, Section \ref{sec:stabtd} will construct an energy-based bilinear form of the TDEFIE, inspired by \cite{Ha-Duong1987,Ha-Duong2003,Ha-Duong2003a,Bachelot1995,Bachelot1995a}. The next section will describe the new solution scheme developed in \cite{Pray2012} and provide the theoretical machinery for choosing bounds on the expansion. Finally, Section \ref{sec:results} will provide numerical examples that validate this technique.   

\section{Variational Formulation based on the energy \label{sec:prob}}

Consider a perfect electrically conducting (PEC) object that resides in free space. Let $\Omega$ denote the surface of this object that occupies a domain $D_-$, and let $D_+ = \mathbb{R}^3/D_-$. Assume that every point on the surface is equipped with a unique normal ${\hat n}_\pm ({\bf r})$ pointing into $D_\pm$ such that \begin{equation}
  {\hat n} = {\hat n}_+ = - {\hat n}_-.
  \label{eq:nrm}
\end{equation}
\begin{figure}[!h]
     \centering
\begin{tikzpicture}[scale=0.5,thick]
    \begin{scope}[scale=0.75,thick]
    \pgftransformrotate{-40}
        \draw(0,0) ellipse (2cm and 4cm);
    \draw[->](1.6,2.5) .. controls (2.0,3.0) and (2.4,3.0) .. (3.0,2.5);
    \draw[->](0.0,4.0)--(0.0,5.0);
    \draw (0.0,5.8) node {$\hat{\bf n}$};
    \draw(3.5,2.4) node {$\Omega$};
    \draw(0.2,2.5) node {$D_-$};
   \draw(4.2,0.25) node {$D_+$};
    \draw[->](-6,0)--(-4,0);
    \draw(-6,1) node {\footnotesize $\{{\bf E}^i({\bf r},t)$, ${\bf H}^i({\bf r},t)\}$};
    \draw[->](4.5,4.5)--(5.0,6.65);
    \draw(5.5,8.0) node {\footnotesize$\{{\bf E}^s({\bf r},t)$, ${\bf H}^s({\bf r},t)\}$};
    \end{scope}
\end{tikzpicture}
\end{figure}

An electric field parameterized by $\left \{ {\bf E}^i ({\bf r},t), {\bf H}^i ({\bf r},t) \right \}$ and bandlimited to $\omega_{max}$ is incident on this body. It is assumed that the fields are quiescent for $t < 0$. The incident field induces unknown currents ${\bf J} ({\bf r},t) \forall {\bf r} \in \Omega$ that produce scattered fields $\left \{ {\bf E}^s ({\bf r},t), {\bf H}^s ({\bf r},t) \right \}$.  We denote the total electric and magnetic fields in $D_\pm$ as ${\bf E}_\pm({\bf r},t)$ and ${\bf H}_\pm({\bf r},t)$. The problem then is to solve for the current ${\bf J}({\bf r},t)$ on the surface of the scatterer ($\Omega$). To this end, we can setup an equivalent problem by setting the fields inside the scatterer ($D_-$) to \begin{subequations}
  \label{eq:eqp}
  \begin{eqnarray}
    {\bf E}_-({\bf r},t) &=& - {\bf E}^i ({\bf r},t) ~\forall~ {\bf r} \in D_-,\\
    {\bf H}_-({\bf r},t) &=& - {\bf H}^i ({\bf r},t) ~\forall~ {\bf r} \in D_-,
  \end{eqnarray}
  and the fields outside the scatterer ($D_+$) to  
  \begin{eqnarray}
    {\bf E}_+({\bf r},t) &=& {\bf E}^s ({\bf r},t) ~\forall~ {\bf r} \in D_-,\\
    {\bf H}_+({\bf r},t) &=& {\bf H}^s ({\bf r},t) ~\forall~ {\bf r} \in D_-,
  \end{eqnarray}
  Then the current ${\bf J}({\bf r},t)$  on the scatterer surface can be written as \begin{equation}
    {\bf J}({\bf r},t) \doteq \hat{\bf n} \times \big[{\bf H}^s({\bf r},t) + {\bf H}^i({\bf r},t)\big].
  \label{eq:cur}
\end{equation}
\end{subequations}
Next, inspired by \cite{Ha-Duong2003,Ha-Duong2003a,Terrasse1993}, we derive a variational formulation relating the sum of the incident field energy interior and the scattered field energy exterior to $D$.    
\section{An energy inspired variational formulation \label{sec:stabtd}}
The electromagnetic energy $E(t)$ at an instant $t$, $\mathbb{R}^3$ can be written as $E(t) = E_+(t) + E_-(t)$, where $E_\pm(t)$  is  defined as 
  \begin{equation}
    E_\pm (t) = \frac{1}{2} \left [ \varepsilon_0 \int_{D_\pm} d {\bf r} \left |{\bf E}_\pm({\bf r},t) \right |^2 + \mu_0 \int_{D_\pm} d{\bf r}' \left | {\bf H}_\pm({\bf r},t) \right |^2 \right ].
    \label{eq:insten2}
  \end{equation} From Poynting's theorem \cite{Stratton1941}, we have the rate of change of total energy as  \begin{equation}
  \frac{\partial}{\partial t} E_\pm (t) = \int_{\Omega} d {\bf r} {\bf E}_\pm ({\bf r},t)\cdot \left (  {\hat n}_\pm \times {\bf H}_\pm({\bf r},t)\right) 
   \label{eq:poynt}
 \end{equation}
 Combining \eqref{eq:poynt} with \eqref{eq:nrm} and \eqref{eq:eqp}, after some manipulation, yields  
 \begin{subequations}
   \begin{equation}
     \frac{\partial}{\partial t} E (t) = \frac{\partial}{\partial t}\left[E_+(t) + E_-(t)\right] = \int_{\Omega} d{\bf r} {\bf E}^s({\bf r},t) \cdot {\bf J}({\bf r},t).
   \label{eq:ej}
 \end{equation}
 Thus, the total energy in the system at some time $T < \infty$ is given by\begin{equation}
   E(T) = \int_{-\infty}^T  dt \frac{\partial}{\partial t} E (t) =\int_{0}^T  dt \int_{\Omega} d{\bf r} {\bf E}^s({\bf r},t) \cdot {\bf J}({\bf r},t),
   \label{eq:concl}
 \end{equation}
	since ${\bf E}^s({\bf r},t) = 0 \forall t < 0$.
 \end{subequations}

 The scattered fields ${\bf E}^s({\bf r},t)$ and ${\bf H}^s({\bf r},t)$ can be related to the surface current ${\bf J}({\bf r},t)$ via integral operators ${\mathcal L}$ and ${\mathcal K}$ as 
\begin{subequations}
  \label{eq:IE}%
\begin{eqnarray}
  \label{eq:EFIE}
  {\bf E}^s ({\bf r},t) &=& {\mathcal L} \circ \{{\bf J}({\bf r},t)\} \\
  &\doteq& -\frac{\mu_0}{4 \pi} \partial_t {\mathcal G} \circ \{{\bf J}({\bf r},t)\} - \frac{1}{4 \pi \varepsilon_0} \nabla \int_{-\infty}^{t} d\tau {\mathcal G} \circ \{\nabla \cdot {\bf J}({\bf r},\tau)\}\nonumber 
\end{eqnarray}
and
\begin{equation}
  \label{eq:MFIE}
  \hat{\bf n} \times {\bf H}^s ({\bf r},t) = {\mathcal K} \circ \{ {\bf J}({\bf r},t)\}  \doteq \hat{\bf n} \times \nabla \times \frac{1}{4\pi}{\mathcal G} \circ \{{\bf J}({\bf r},t)\}   
\end{equation}
where ${\mathcal G}$ is the retarded potential operator defined for functions ${\bf J}({\bf r},t)$ as  
\begin{equation}
  {\mathcal G} \circ \{{\bf J}({\bf r},t)\} \doteq \int_\Omega d{\bf r}'  \int_{-\infty}^{\infty} dt' \frac{\delta(t-t' - \frac{|{\bf r}-{\bf r}'|}{c})}{|{\bf r}-{\bf r}'|} {\bf J}({\bf r}',t')
\end{equation}
\end{subequations}


Following the development in \cite{H-Duong2003a}, we note that equation \eqref{eq:concl} suggests the construction of a space time variational form as \begin{equation}
  \int_{0}^T dt\int_\Omega d{\bf r}~{\bf j}({\bf r},t)\cdot \left [{\bf E}^s({\bf r},t) \right ] = \int_{0}^T dt \int_\Omega d{\bf r}~{\bf j}({\bf r},t)\cdot \left[-{\bf E}^i({\bf r},t)\right]
  \label{eq:varfinal}
\end{equation}
for a test function ${\bf j}({\bf r},t) \in \mani$. We can write this variational form as \begin{equation}
  {\mathcal B}_{e}\left[{\bf j},{\bf J}\right] =  \int_{0}^T dt \int_\Omega d{\bf r}~{\bf j}({\bf r},t)\cdot \Big[-{\bf E}^i({\bf r},t) \Big], 
  \label{ebil}
\end{equation} where the bilinear form ${\mathcal B}_e\left[{\bf j},{\bf J}\right]$ can be defined using the operator notation introduced in \eqref{eq:IE},  \begin{equation}
  {\mathcal B}_e\left[{\bf j}, {\bf J}\right] = \int_{0}^T dt \int_\Omega d{\bf r}~{\bf j}({\bf r},t)\cdot \Big[{\mathcal L} \circ \{{\bf J}({\bf r},t)\}\Big];
  \label{eq:vare}
\end{equation} the upper limits on the temporal integral can be restricted to $T$ due to the finite velocity of propagation. We can construct a similar bilinear form for the magnetic field equation \eqref{eq:MFIE} which can be written as \begin{equation}
  {\mathcal B}_h\left[{\bf j},{\bf J}\right] =  \int_{0}^T dt \int_\Omega d{\bf r}~{\bf j}({\bf r},t)\cdot \Big[ \hat{\bf n} \times {\bf H}^i({\bf r},t)\Big]=\int_{0}^T dt \int_\Omega d{\bf r}~{\bf j}({\bf r},t)\cdot \Big[\left({\mathcal I} - {\mathcal K}\right)  \circ \{{\bf J}({\bf r},t)\}\Big]
  \label{hbil}
\end{equation}
For some parameter $\alpha > 0$, we can now construct the variational form for the combined field integral equation as \begin{equation}
  {\mathcal B}\left[{\bf j},{\bf J}\right] =  \int_{0}^T dt \int_\Omega d{\bf r}~{\bf j}({\bf r},t)\cdot \Big[-{\bf E}^i({\bf r},t) + \alpha \hat{\bf n} \times {\bf H}^i({\bf r},t)\Big],
  \label{eq:bil}
\end{equation} where the bilinear form ${\mathcal B}\left[{\bf j},{\bf J}\right]$ can be defined as \begin{equation}
  {\mathcal B}\left[{\bf j}, {\bf J}\right] = \int_{0}^T dt \int_\Omega d{\bf r}~{\bf j}({\bf r},t)\cdot \Big[{\mathcal L} \circ \{{\bf J}({\bf r},t)\} + \alpha\left({\mathcal I} - {\mathcal K}\right)  \circ \{{\bf J}({\bf r},t)\}\Big].
  \label{eq:var}
\end{equation} Finally, we define the notation \begin{equation}
 	<{\bf u}, {\bf v}> = \int_{0}^T dt \int_\Omega d{\bf r}~{\bf u}({\bf r},t)\cdot {\bf v}({\bf r},t) 
	\end{equation}
for functions ${\bf u}, {\bf v} \in \mani$.
We can now extend ideas obtained in \cite{Ha-Duong2003,Ha-Duong2003a,Terrasse1993} to show coercivity of the bilinear form and stability of the solution scheme. To this end we have the following relationship between the bilinear form in \eqref{eq:bil} and the energy.  
\begin{equation}
  {\mathcal B}[{\bf J},{\bf J}] = E(T) + <{\bf J},\alpha \left({\mathcal I}-{\mathcal K}\right)\circ \{{\bf J}\}>.
  \label{eq:bilen}
\end{equation}Combining this with the compactness of the MFIE (${\mathcal K}$) operator \cite{Hsiao1997}, we have the following G{a}rding inequality for the coercivity of the bilinear form ${\mathcal B}$. 
\begin{proposition}
  Given the bilinear form, ${\mathcal B}[{\bf j},{\bf J}]$, $\exists$ a constant $C > 0$ such that
\begin{equation}
  {\mathcal B}[{\bf J},{\bf J}] + \alpha <{\bf J}, \left({\mathcal K}) \circ \{{\bf J}\}\right)> ~\ge  C~\Vert{\bf J}\Vert^2 
  \label{eq:coerc}
\end{equation}
for norms in $\mani$.
\end{proposition}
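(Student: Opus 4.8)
The plan is to obtain \eqref{eq:coerc} directly from the energy identity \eqref{eq:bilen}, and then to pin down exactly what the cited compactness of ${\mathcal K}$ is doing. First I would set ${\bf j}={\bf J}$, invoke \eqref{eq:bilen}, and add $\alpha<{\bf J},{\mathcal K}\circ\{{\bf J}\}>$ to both sides:
\[
  {\mathcal B}[{\bf J},{\bf J}] + \alpha<{\bf J},{\mathcal K}\circ\{{\bf J}\}> \;=\; E(T) + \alpha<{\bf J},({\mathcal I}-{\mathcal K})\circ\{{\bf J}\}> + \alpha<{\bf J},{\mathcal K}\circ\{{\bf J}\}> .
\]
By linearity of the pairing $<\cdot,\cdot>$ in its second slot and $({\mathcal I}-{\mathcal K})\circ\{{\bf J}\} + {\mathcal K}\circ\{{\bf J}\} = {\bf J}$, the two ${\mathcal K}$-terms cancel and the right-hand side collapses to $E(T) + \alpha<{\bf J},{\bf J}> = E(T) + \alpha\|{\bf J}\|^2$, the norm being that of $\mani$.

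The remaining ingredient is positivity of the field energy. From its definition \eqref{eq:insten2}, each of $E_+(T)$ and $E_-(T)$ is a combination with the positive weights $\varepsilon_0/2$ and $\mu_0/2$ of squared $L^2(D_\pm)$ norms of the total fields, hence nonnegative; therefore $E(T) = E_+(T)+E_-(T)\ge 0$. Since $\alpha>0$, combining this with the previous display gives
\[
  {\mathcal B}[{\bf J},{\bf J}] + \alpha<{\bf J},{\mathcal K}\circ\{{\bf J}\}> \;=\; E(T) + \alpha\|{\bf J}\|^2 \;\ge\; \alpha\,\|{\bf J}\|^2 ,
\]
which is \eqref{eq:coerc} with $C=\alpha$. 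I would remark that this $L^2$ lower bound is weaker than the energy-norm coercivity one might hope to extract from $E(T)$, but it is all that \eqref{eq:coerc} asks for.

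Finally, I would record the reason for citing the compactness of ${\mathcal K}$: it is not needed for the inequality itself, but it certifies that the bilinear form $<\cdot,{\mathcal K}\circ\{\cdot\}>$ added on the left is a \emph{compact} perturbation of $\|\cdot\|^2$, so that \eqref{eq:coerc} is a genuine G{\aa}rding inequality and, together with a separate uniqueness argument, yields existence and stability of the variational solution through the Fredholm alternative. The real work, and the main obstacle, lies upstream: one must establish \eqref{eq:bilen} rigorously for ${\bf J}$ only in $\mani$, i.e.\ justify the Poynting manipulation \eqref{eq:poynt}--\eqref{eq:concl} and the identity $<{\bf J},{\bf E}^s> = E(T)$ at that regularity, which presupposes that the current-to-field map delivers finite-energy fields (so $E(T)$ is finite as well as nonnegative) and that the space--time integrations by parts are legitimate without the extra smoothness imposed in \cite{Bachelot1995}; the compactness of ${\mathcal K}$ on the pertinent trace space over a possibly non-smooth $\Omega$ is itself a point requiring care.
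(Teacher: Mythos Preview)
Your argument is correct and follows the same route the paper sketches: combine the energy identity \eqref{eq:bilen} with the nonnegativity of $E(T)$ from \eqref{eq:insten2}. Your explicit cancellation of the ${\mathcal K}$-terms is in fact sharper than the paper's one-line appeal to ``the bounded nature of ${\mathcal K}$,'' since it shows that neither boundedness nor compactness of ${\mathcal K}$ is needed for \eqref{eq:coerc} itself (you obtain $C=\alpha$ directly); as you correctly note, the compactness is only what makes \eqref{eq:coerc} a G{\aa}rding inequality rather than a bare lower bound.
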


The proof of the proposition follows naturally from the bounded nature of the ${\mathcal K}$ operator and the positivity of the energy.

\section{Numerical evaluation of the bilinear form \label{sec:eval}}
    Accurate evaluation of the bilinear form in \eqref{eq:bil} is important to ensure stability \cite{Ha-Duong2003}. In practice, a discretization subspace $V_{h,\tau}$ is chosen as that spanned by a set of space-time basis functions of the form ${\bf S}({\bf r})T(t)$, where ${\bf S}({\bf r})$ are spatial basis functions (usually the Rao-Wilton-Glisson or Thomas-Raviart \cite{Nedelec1980, Rao1982} spaces), defined on a triangular tessellation of $\Omega$, say $\{\Omega_s\}$.  $T(t)$ are usually shifted piecewise Lagrange polynomials with support in $\Omega_t \subset (0,T)$. $V_{h,\tau}$ is indexed on the average size of the spatial and temporal functions, i.e. $h \approx diam(\Omega_s)$ and $\tau \approx diam(\Omega_t)$. While the demonstration of stability developed in \cite{Ha-Duong2003} could be extended to electromagnetics \cite{Bachelot1995}, that is not the focus of this paper. Instead, here we focus on the construction of a rigorous mechanism to evaluate the bilinear form above.  
    
    The piecewise nature of the Lagrange polynomials implies that accurate computation of the bilinear form requires identification of the domains of continuity of the integrand \cite{Pray2012,Pray2012a}. It has been shown that the most accurate way to evaluate the integral is to compute the correlation between the source and observation domain  \cite{Shi2011}, identify regions of continuity and evaluate the potential integral on each of these regions. However, it has also been shown in \cite{Ha-Duong1987,Ha-Duong2003,Shanker2009} that evaluating the integral accurately over the source domain and collocation over the observation domain is sufficient to lead to stable solutions. This technique is tantamount to to finding arcs of intersection between arbitrary triangles (on which the source resides) and spheres (centered at the point of observation). While tedious, this can be done for piecewise flat triangles \cite{Shanker2009,Shi2011}.  

In \cite{Pray2012,Pray2012a} the authors developed a technique to compute the bilinear form numerically, without the need to identify domains of piecewise continuity of the integrand. This was achieved by approximating the space time convolution of the Green's kernel with the basis function using a separable expansion in space and time. In this section, we will review the expansion and provide a mechanism to truncate the expansion for given error criteria. 
    
We assume a set of functions of the form ${\bf j}({\bf r},t) \doteq{\bf S}({\bf r})T(t)$ defined on a simplicial tessellation of the domain such that ${\bf S}({\bf r}) = 0 ~\forall {\bf r} \notin \Omega_s$ and $T(t)= 0 ~\forall t \notin \Omega_t$. The bilinear form then involves convolutions of the form.  
  \begin{subequations}
    \begin{equation}
      \frac{\delta\left(t-\frac{|{\bf r}|}{c}\right)}{|{\bf r}|}\star_{s,t}{\bf S}({\bf r})T(t) = \int_{\Omega_s} d{\bf r}'{\bf S}({\bf r}') \int_{\Omega_t}dt' T(t') \frac{\delta\left(t-t'-\frac{|{\bf r}-{\bf r}'|}{c}\right)}{|{\bf r}-{\bf r}'|},
    \label{eq:maineq}
  \end{equation}
  which can be rewritten as  
\begin{equation}
  {\mathcal I}_0(t,{\bf r}) =  \int_{\mathbb{R}^3} d{\bf r}'\Pi_{\Omega_s} {\bf S}({\bf r}') \int_{-\infty}^{\infty} dt' \Pi_{\Omega_t} T(t') \frac{\delta\left(t-t'-\frac{|{\bf r}-{\bf r}'|}{c}\right)}{|{\bf r}-{\bf r}'|},
  \end{equation}
  where $\Pi_{\Omega_s}$ and $\Pi_{\Omega_t}$ denote indicator functions on $\Omega_s$ and $\Omega_t$ respectively. Let  $\sball$ be the spherical shell in space of radii $c\alpha$ and $c\beta$, and $\tball$ be a pulse function in time from $t= \alpha$ to $t=\beta$, such that $\Pi_{\Omega_s} \subset \sball$ and $\Pi_{\Omega_t} \subset \tball$. For a given observation point ${\bf r}_0$ and time $t_0$, we have,   \begin{eqnarray}
    &{\mathcal I}_0(t_0,{\bf r}_0) = \\ &\int_{\mathbb{R}^3} d{\bf r}'\Pi_{\Omega_s} S({\bf r}')\sballi \int_{-\infty}^{\infty} dt' \Pi_{\Omega_t} T(t')\tballi \frac{\delta\left(t_0-t'-\frac{|{\bf r}_0-{\bf r}'|}{c}\right)}{|{\bf r}_0-{\bf r}'|}\nonumber ,
    \label{eq:firstapp}
  \end{eqnarray}
  
\end{subequations}  
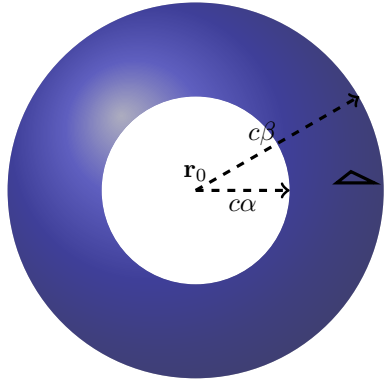
\begin{figure}
  \centering
\begin{tikzpicture}
  \begin{scope}[scale=0.5]
  \def\R{2.5};
\fill[ball color=blue] (0,0) circle (\R); 
\fill[opacity=0.5,ball color=blue] (0,0) circle (5); 
\fill[color=white] (0,0) circle (2.5); 
\draw[->,dashed,very thick] (0,0) -- (2.5,0);
\draw[->,dashed,very thick] (0,0) -- (4.35,2.5);
\draw(1.25,0)node[below]{$c\alpha$};
\draw(1.77,0.95)node[above]{$c\beta$};
\draw[very thick](3.75,0.2)--(4.13,0.5)--(4.75,0.2)--cycle;
\node(0,0)[above]{${\bf r}_0$};
\end{scope}
\end{tikzpicture}
\caption{Domain of the spatial integrals}
\end{figure}

Equation \eqref{eq:firstapp} can be re-written using the Stone-Wierstrass's theorem as \begin{eqnarray}
  &{\mathcal I}_0(t_0,{\bf r}_0) = \\ &\int_{\mathbb{R}^3} d{\bf r}'\Pi_{\Omega_s} S({\bf r}')\sballi \int_{-\infty}^{\infty} dt' \Pi_{\Omega_t} T(t')\tballi \times \nonumber \\ &\sum_{l=0}^{\infty}\frac{(2l+1)}{2}k_1\frac{P_n\left(k_1\frac{|{\bf r}_0-{\bf r}'|}{c} + k_2\right)}{|{\bf r}_0-{\bf r}'|}P_n\left(k_1(t_0 -t')+k_2\right) \nonumber ,
  \label{eq:expn}
\end{eqnarray}

Note, the polynomials $P_n(k_1(t_0-t)+k_2)$ and $P_n(k_1\frac{|{\bf r}_0-{\bf r}'|}{c} + k_2)$ in \eqref{eq:expn} are continuous over $\tballi$ and $\sballi$ respectively, and specifically, over the domains of integration $\Omega_s$ and $\Omega_t$. This permits numerical evaluation of the integral without having to resort to identifying domains of continuity. While we have demonstrated the technique for the source integral, we note that the same scheme can be extended to both source and testing integrals via computation of the geometric correlation function defined in \cite{Shi2011}.
\subsection{Truncation of the expansion}

The expansion in \eqref{eq:expn} is exact. However, in practice, this procedure is useful only if the summation can be truncated to some $N$.  Thus the aim is to evaluate the error \begin{eqnarray}
  \varepsilon_N({\bf r},t) &\doteq& \\ & & \left|\Pi_{\Omega_s}({\bf r})S({\bf r}) \Pi_{\Omega_t}(t)T(t) \stconv \sball\tball \frac{\delta\left(t_0-t'-\frac{|{\bf r}_0-{\bf r}'|}{c}\right)}{|{\bf r}_0-{\bf r}'|} \right. \nonumber \\ 
   &-& \left.\Pi_{\Omega_s}({\bf r})S({\bf r}) \Pi_{\Omega_t}(t)T(t) \stconv \sball\tball \sum_{l=0}^{N}\frac{(2l+1)}{2}k_1\frac{P_n\left(k_1\frac{|{\bf r}_0-{\bf r}'|}{c} + k_2\right)}{|{\bf r}_0-{\bf r}'|}P_n\left(k_1(t_0 -t')+k_2\right) \right| \nonumber. 
  \label{eq:err}
\end{eqnarray}
In order to derive a rigorous bound on the error \eqref{eq:err}, we first define the following 
\begin{definition}
  Let the temporal Fourier transform of  $T(t)$ be \begin{equation}
    \tmf{T} = {\mathcal F}\{T(t)\} \doteq \int_{-\infty}^{\infty} dt T(t) e^{j\omega t},
    \label{eq:tf}
  \end{equation}
  and the spatial Fourier transform of a function $S({\bf r})$ be \begin{equation}
    \spf{S} = {\mathcal F}\{S({\bf r})\} \doteq \int_{\mathbb{R}^{3}} d{\bf r} S({\bf r}) e^{j {{\bm \lambda}} \cdot {\bf r}}. 
    \label{eq:sf}
  \end{equation}
  Using these definitions it is possible to re-write equation \eqref{eq:err} as \begin{eqnarray}
    &&\tilde{\varepsilon}_N(\omega,{\bm \lambda})\doteq \\ && \left|{\mathcal F}\{\Pi_{\Omega_s}({\bf r})S({\bf r}) \Pi_{\Omega_t}(t)T(t)\}  {\mathcal F}\left\{\sball\tball \frac{\delta\left(t_0-t'-\frac{|{\bf r}_0-{\bf r}'|}{c}\right)}{|{\bf r}_0-{\bf r}'|} \right\} \right. \nonumber \\ 
    &-& \left. {\mathcal F}\left\{\sum_{l=0}^{N}\frac{(2l+1)}{2}k_1\frac{P_l\left(k_1\frac{|{\bf r}_0-{\bf r}'|}{c} + k_2\right)}{|{\bf r}_0-{\bf r}'|}P_l\left(k_1(t_0 -t')+k_2\right) \right\} \right| \nonumber.
    \label{eq:farr}
  \end{eqnarray}
 Next, we define spatial and temporal bandwidths of interest $ {\bm \lambda}_{m}$ and $\omega_{m}$. These bandwidths are typically controlled by the bandwidth of the input signal and the size of the object being analyzed. 
\end{definition}  

\begin{theorem}
  Given the definitions above, $\exists$ constants $K_1$, $K_2$ and $M(max(\omega_{m},{\bm \lambda}_m))$, such that $\forall N \ge M$, \begin{equation}
 \varepsilon_N(\omega, {|\bm \lambda|})  \le \frac{z_t}{z_s}\left(z_t z_s\right)^N \left[\frac{K_1}{4N^2} \frac{z_t z_s}{\left(1-z_t^2 z_s^2\right)^2} + \frac{K_2}{4N^2} \frac{1}{1-z_t^2z_n^2}\right]
    \label{eq:bnd}
  \end{equation}
  where $z_t = \frac{\omega k_2}{2k_1 N}$ and $z_s = \frac{|\bll| k_2}{2k_1 N}$.
  \label{thm:thm1}
\end{theorem}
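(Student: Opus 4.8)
The plan is to carry the whole estimate into the Fourier domain, where the space--time convolution $\stconv$ becomes an ordinary product; there the truncation error reduces to the Fourier transform of the \emph{tail} of the Legendre series, which I can recognise as a product of spherical Bessel functions and then bound termwise. First I would apply the convolution theorem to \eqref{eq:err}--\eqref{eq:farr}: in the variables of \eqref{eq:tf}--\eqref{eq:sf} the error splits as $|\mathcal F\{\Pi_{\Omega_s}S\,\Pi_{\Omega_t}T\}|$ times the transform of the difference between the shell/pulse--restricted retarded kernel and its truncation. The Legendre completeness relation $\sum_{l\ge 0}\frac{2l+1}{2}P_l(x)P_l(y)=\delta(x-y)$ on $[-1,1]^2$, together with the fact that $k_1,k_2$ are chosen so the affine maps $x=k_1|{\bf r}_0-{\bf r}'|/c+k_2$, $y=k_1(t_0-t')+k_2$ send $\sballi$ and $\tballi$ onto $[-1,1]$, shows that on that product of shell and pulse the kernel equals the \emph{full} series \eqref{eq:expn}; hence the difference is exactly the tail $\sum_{l>N}$. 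The prefactor is, for $|\omega|\le\omega_m$ and $|\bll|\le\bll_m$, at most $\|S\|_{L^1}\|T\|_{L^1}$, an $N$-independent constant I fold into $K_1,K_2$.

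Then I would transform the $l$-th tail term, which separates into a temporal and a spatial factor. The temporal factor, after substituting $u=k_1(t_0-t')+k_2$, has modulus $(2/|k_1|)\,|j_l(\omega/k_1)|$ by the classical identity $\left|\int_{-1}^{1}P_l(u)e^{jau}\,du\right|=2|j_l(a)|$. The spatial factor is the $3$D transform of a function radial about ${\bf r}_0$ and supported on the shell; the radial Jacobian is exactly cancelled by the explicit $1/|{\bf r}_0-{\bf r}'|$, and the same substitution turns it into $\frac{4\pi c}{|\bll||k_1|}\int_{-1}^{1}P_l(u)\sin(au-b)\,du$ with $a=c|\bll|/k_1$, again $j_l(a)$ up to bounded factors. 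Multiplying by the prefactor $\frac{2l+1}{2}k_1$ and pushing every $N$-independent quantity ($4\pi$, $c$, the residual $1/|\bll|$ of the radial transform, the $L^1$ norms) into the constants, the modulus of the $l$-th tail term is controlled by a constant times $(2l+1)\,|j_l(c|\bll|/k_1)|\,|j_l(\omega/k_1)|$.

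Finally I would insert $|j_l(z)|\le z^l/(2l+1)!!$ and sum. The majorant $a_l=(z_1z_2)^l/((2l+1)!!)^2$, with $z_1=c|\bll|/k_1$ and $z_2=\omega/k_1$, has ratio $a_{l+1}/a_l=z_1z_2/(2l+3)^2$, which for $l\ge N$ is bounded by $z_tz_s$ in the notation of the statement --- the precise identification of $z_t,z_s$ with the bandwidths and with $k_1,k_2$ being a matter of the explicit change-of-variables constants together with Stirling's formula for $(2N+1)!!$. Thus $M(\max(\omega_m,\bll_m))$ is the order past which $z_tz_s<1$ with room to spare, i.e.\ the order at which $j_l$ enters its super-geometric decay. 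Writing $(2l+1)=2(l-N)+(2N+1)$, summing $\sum_{m\ge1}q^m=q/(1-q)$ and $\sum_{m\ge1}mq^m=q/(1-q)^2$ with $q=z_tz_s$, and then trading each factor $1\pm z_tz_s$ for $1-z_t^2z_s^2$ at the cost of an absorbed constant, produces the two summands of \eqref{eq:bnd}: the $1/(1-z_t^2z_s^2)$ term from the constant part $(2N+1)$ and the $z_tz_s/(1-z_t^2z_s^2)^2$ term from the linear part, with the overall $(z_tz_s)^N$ the bottom-of-tail value $a_N$, the factor $z_t/z_s=\omega/|\bll|$ carrying the residual $1/|\bll|$ of the radial transform, and the $1/(4N^2)$ what survives of $((2N+1)!!)^{-2}$ after the $(2l+1)$ prefactor.

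The delicate part is this last step. Making $M$ explicit as a function of the bandwidths relies on the quantitative fact that $j_l(z)$ decays super-geometrically as soon as $l$ exceeds a fixed multiple of $z$, \emph{uniformly} for $z\le\max(\omega_m,\bll_m)/k_1$ --- this is what lets the estimate $a_{l+1}/a_l\le z_tz_s$ and convergence of the geometric series hold simultaneously --- and one then has to track every $N$-independent constant through the two Fourier transforms, the Bessel bound, and Stirling's formula so as to reproduce the exact functional form of \eqref{eq:bnd}, in particular the $1/N^2$ scaling and the $z_t/z_s$ prefactor. The earlier reduction is, by contrast, routine bookkeeping with the convolution theorem and the Legendre--Bessel integral.
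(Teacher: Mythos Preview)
Your proposal follows essentially the same path as the paper: pass to the Fourier domain, identify the truncation error as the tail $\sum_{l>N}(2l+1)\,C\,j_l(\omega/k_1)\,j_l(c|\bll|/k_1)$, bound the spherical Bessel functions termwise, and sum the resulting arithmetico-geometric series in the ratio $z_tz_s$. The only real difference is the Bessel estimate employed --- the paper uses the Debye-type uniform bound $j_l(z)\lesssim (ez/2l)^{l}$ for $l>z$ and then relaxes it, whereas you use the power-series bound $|j_l(z)|\le z^l/(2l+1)!!$ together with Stirling; these are equivalent to within $N$-independent constants and lead to the same final form \eqref{eq:bnd}.
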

\begin{proof}
  The proof is by construction. \begin{subequations}
    The Fourier transform \begin{eqnarray}
      &&{\mathcal F}\left\{\sum_{l=0}^{N}\frac{(2l+1)}{2}k_1\frac{P_n\left(k_1\frac{|{\bf r}_0-{\bf r}'|}{c} + k_2\right)}{|{\bf r}_0-{\bf r}'|}P_n\left(k_1(t_0 -t')+k_2\right) \right\} \\&&=  \sum_{l=0}^N \frac{8\pi c}{\bll k_1} (2l+1) e^{-j\frac{\omega k_2}{k_1}} j_l\left(\frac{\omega}{k_1}\right)j_l\left(\frac{\bll c}{k_1}\right) \nonumber
      \label{eq:ft}
    \end{eqnarray}
    Thus, the error is the residual in the sum given by \begin{equation}
      \varepsilon_N(\omega, {\bm \lambda}) =  \sum_{l=N+1}^\infty (2l+1) \frac{8\pi c}{|\bll| k_1}  e^{-j\frac{\omega k_2}{k_1}} j_l\left(\frac{\omega}{k_1}\right)j_l\left(\frac{|\bll| c}{k_1}\right)
      \label{eq:newbnd}
    \end{equation}
    From the definition of the spherical Bessel function, we have 
    \begin{equation}
      j_l(z) = \sqrt{\frac{\pi}{2z}}J_{l+(1/2)}(z).
      \label{eq:sbesbnd}
    \end{equation}
Using the monotonicity relationship for cylindrical Bessel functions, for all $l+1 < z$. 
\begin{equation}
  J_{l+(1/2)}(z) \le \left(\frac{z}{l + \sqrt{l^2-z^2}}\right)^{(l)} e^{\left(\sqrt{l^2-z^2}\right)} ~\forall l<z-1,
  \label{eq:besbnd}
\end{equation}
    and combining with \eqref{eq:sbesbnd}, we have
    \begin{equation}
      j_l(z) \le \sqrt{\frac{\pi}{2}}\left(\frac{z}{l + \sqrt{l^2-z^2}}\right)^{(l)} e^{\left(\sqrt{l^2-z^2}\right)}.
      \label{eq:bndbdn}
    \end{equation}
    The quality of this bound is illustrated in Fig. \ref{fig:bnd}. Using \eqref{eq:bndbdn} and \eqref{eq:newbnd}, the bound on the error is 
\begin{eqnarray}
  \varepsilon_N(\omega, {|\bm \lambda|}) \le  \sum_{l=N+1}^\infty \left[(2l+1)\frac{C}{|\bll|} e^{-j\frac{\omega k_2}{k_1}}e^{\left(\sqrt{l^2-(\omega k_2/k_1)^2}  +\sqrt{l^2-(|\bll|c/k_1)^2} \right)}\nonumber \right.\\
  \left.\left(\frac{\omega}{l+\sqrt{l^2-\left(\frac{wk_2}{k_1}\right)^2}}\right)^l \left(\frac{|\bll|}{l+\sqrt{l^2-\left(\frac{|\bll| k_2}{k_1}\right)^2}}\right)^l \right].
      \label{eq:freqbnd}
    \end{eqnarray}
    Equation \eqref{eq:freqbnd} provides a tight bound on the error. For ease of representation, we can relax the bound by writing 
    \begin{eqnarray}
      \varepsilon_N(\omega, {|\bm \lambda|}) &\le&  \sum_{l=N+1}^\infty \left[(2l+1)\frac{C}{|\bll|} e^{-j\frac{\omega k_2}{k_1}}e^{\left(\sqrt{l^2-(\omega k_2/k_1)^2}  +\sqrt{l^2-(|\bll|c/k_1)^2} \right)}\nonumber \left(\frac{\omega}{2l}\right)^l \left(\frac{|\bll|}{2l}\right)^l \right] \nonumber \\
      &\le& \sum_{l=N+1}^\infty \left[(2l+1)\frac{C}{|\bll|} e^{-j\frac{\omega k_2}{k_1}}e^{2l}\nonumber \left(\frac{\omega}{2l}\right)^l \left(\frac{|\bll|}{2l}\right)^l \right]\nonumber \\
      &\le& \frac{C}{|\bll|} e^{-j\frac{\omega k_2}{k_1}}\sum_{l=N+1}^\infty \left[(2l+1) \left(\frac{e^2 \omega |\bll|}{4l^2}\right)^l\right]
      \label{eq:tmp}
    \end{eqnarray}
    The series in \eqref{eq:tmp} converges very rapidly and in particular, we have \begin{equation}
      \varepsilon_N(\omega, {|\bm \lambda|}) \le \frac{C}{|\bll|} e^{-j\frac{\omega k_2}{k_1}}\sum_{l=N+1}^\infty \left[(2l+1) \left(\frac{e^2 \omega |\bll|}{4N^2}\right)^l\right] 
      \label{eq:Nbnd}
    \end{equation}
    which is a standard arithmetico-geometric series with the sum\begin{equation}
      \frac{C\omega}{|\bll|} e^{-j\frac{\omega k_2}{k_1}}\left(\frac{e^2 \omega |\bll|}{4N^2}\right)^{N} \left[\frac{e^2 \omega |\bll|}{\left(4N^2-e^2 \omega |\bll|\right)^2} + \frac{2N+3}{\left(4N^2-e^2 \omega |\bll|\right)} \right]
      \label{eq:finbnd}
    \end{equation}
    Further setting $z_t = \frac{\omega k_2}{2k_1 N}$ and $z_s = \frac{|\bll| k_2}{2k_1 N}$, we have \begin{equation}
      \varepsilon_N(\omega, {|\bm \lambda|})  \le \frac{z_t}{z_s}\left(z_t z_s\right)^N \left[\frac{K_1}{4N^2} \frac{z_t z_s}{\left(1-z_t^2 z_s^2\right)^2} + \frac{K_2}{4N^2} \frac{1}{1-z_t^2z_n^2}\right]
      \label{eq:finalbnd}
    \end{equation}
  \end{subequations}
\end{proof}
\begin{figure}
  \centering
  \includegraphics[width=0.45\textwidth,height=0.38\textwidth]{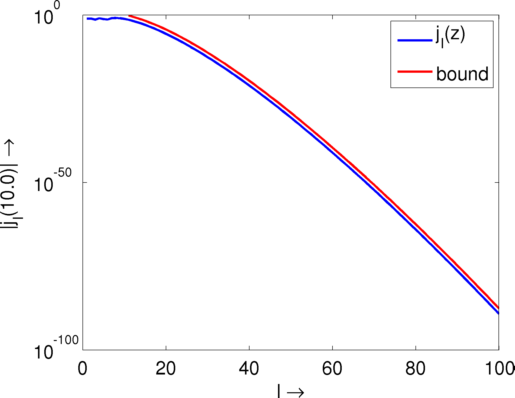}
  \caption{\label{fig:bnd} Bound on spherical Bessel functions. The figure shows $j_l(z)$ and the bound in \eqref{eq:bndbdn} at $z=10.0$ and as a function of $l$.}
\end{figure}

The bound on the error derived above implies that for $z_t<1$ and $z_s < 1$, the error decreases rapidly with $z_t$ and $z_s$. In turn, this also implies that for all values $ \omega \ge \omega_m$ and $|\bll| \ge |\bll|_m$, the error rapidly decreases with $N$, as long as $N > 2*max(\frac{\omega k_2}{k_1},\frac{|\bll k_2}{k_1})$. Indeed, in practice, it is sufficient to pick $N>2*min(\frac{\omega k_2}{k_1}, \frac{|\bll| k_2}{k_1})$. We utilize this result, to construct an efficient scheme to integrate the bilinear form derived in Section \ref{sec:stabtd}. In the next section, we will demonstrate numerical results obtained using this technique.  

\section{Numerical Results\label{sec:results}}
First, we demonstrate the construction of the spatio temporal bandwidths, and corresponding truncation limit $N$.  For simplicity and ease of demonstration, we will derive the truncation limits in temporal frequency; the spatial frequency bounds are identical. 

Consider a band limited signal that is convolved with a Lagrange polynomial of order $1$. This convolution is then approximated using expression in equation \eqref{eq:expn}, truncated to different values $N$. Figure \ref{fig:limits} shows the spectrum of the true convolution overlaid on approximations for different $N$. The domain of the expansion $\tball$ is chosen to be 3$\Delta t$ in width, and the temporal basis functions are chosen to be supported in $\omega_t= 2 \Delta_t$. Figure \ref{fig:err} shows the convergence in the norm of the error within the band as a function of $N$. As is clear from both the figures, the expansion can be efficiently truncated with minimal and controllable loss in accuracy.     
\begin{figure}
  \centering
  \subfloat[(Temporal) Spectrum of approximations to convolution\label{fig:limits}]{
  \includegraphics[width=0.45\textwidth,height=0.38\textwidth]{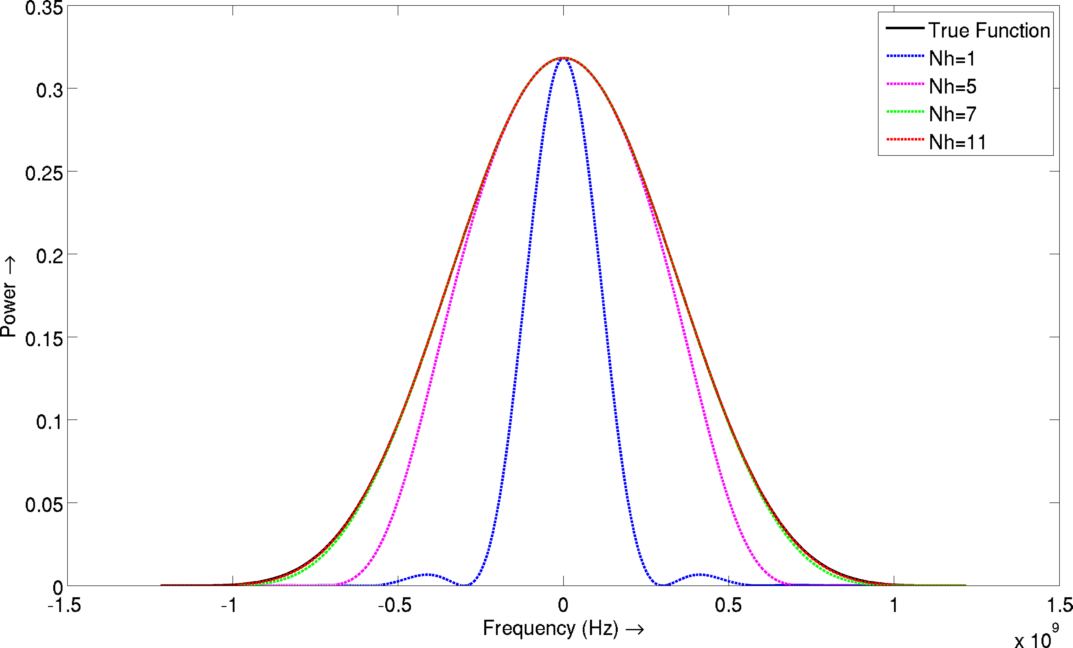}
  }
  \subfloat[Convergence of error with Nh\label{fig:err}]{
  \includegraphics[width=0.45\textwidth,keepaspectratio=true]{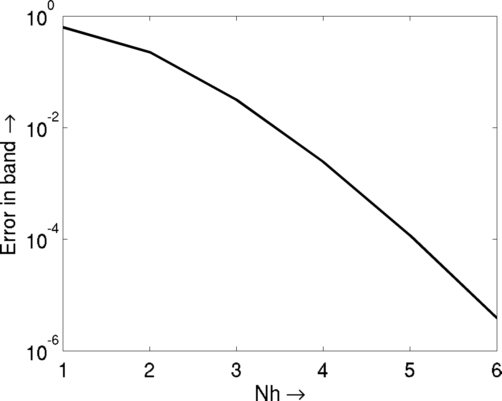}
  }
  \caption{Numerical realization of truncation bounds}
\end{figure}

Next, we present results that validate the full solution scheme developed in \cite{Pray2012} using the expansions derived in this paper. To this end, we consider scattering from a perfectly conducting sphere of radius $1.0m$, due to an incident plane wave. The incident field is a plane wave of the form 
\begin{equation}
	{\bf E}^{i}({\bf r},t)={\hat u}\cos(2\pi f_0t)e^{-(t-{\bf r}\cdot{\hat k} /c-t_p)^2/2\sigma^2 }~,
	\label{eq:einc}
\end{equation}
where $\hat{u}$ denotes the polarization vector, $f_0$ the center frequency, and 
${\hat k}$ the direction of propagation.  
The values  $\sigma$ and $t_p$ are calculated as $\sigma=3/(2\pi B)$ and $t_p=6\sigma$, where $B$ denotes the
bandwidth of ${\bf E}^i$ in Hz.  The incident power is calculated to be approximately 160 dB below the peak
at $f_{max}=f_0+B$ and $f_{min}=f_0-B$.  The sphere is discretized using curvilinear triangular elements (as described in \cite{Graglia1997}) of order $g=3$. Temporal basis functions are first order Lagrange polynomials and the expansion of the spatio-temporal convolution is truncated using the criteria derived in this paper, resulting in a maximum $N=7$. The incident field is wide band with a center frequency of $150MHz$ and bandwidth of $149.9MHz$. It is seen that the current on the surface of the sphere is stable for $20,000$ time steps (Figure \ref{fig:sphcur}) with $\alpha= 0.5$. Similarly, the far field observed due to this current, normalized to the incident field, can be used as a metric for comparison against an analytical frequency domain result. The excellent agreement with the analytical result at three different frequencies across the bandwidth validates the technique (Figure \ref{fig:rcs}). 
\begin{figure}[h!]
  \centering
  \subfloat[Current on Surface of sphere\label{fig:sphcur}]{
  \includegraphics[width=0.45\textwidth,keepaspectratio=true]{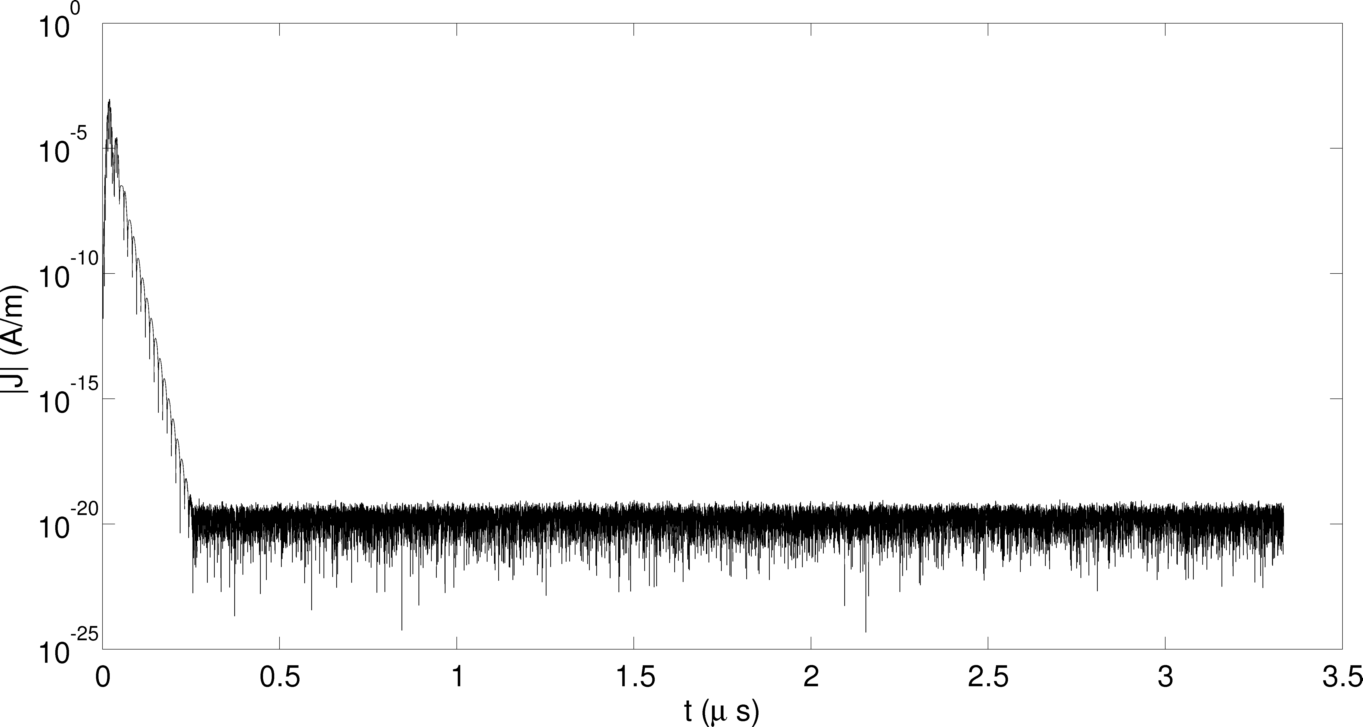}
  }
  \subfloat[Analytical validation of far field data\label{fig:rcs}]{
  \includegraphics[width=0.45\textwidth,keepaspectratio=true]{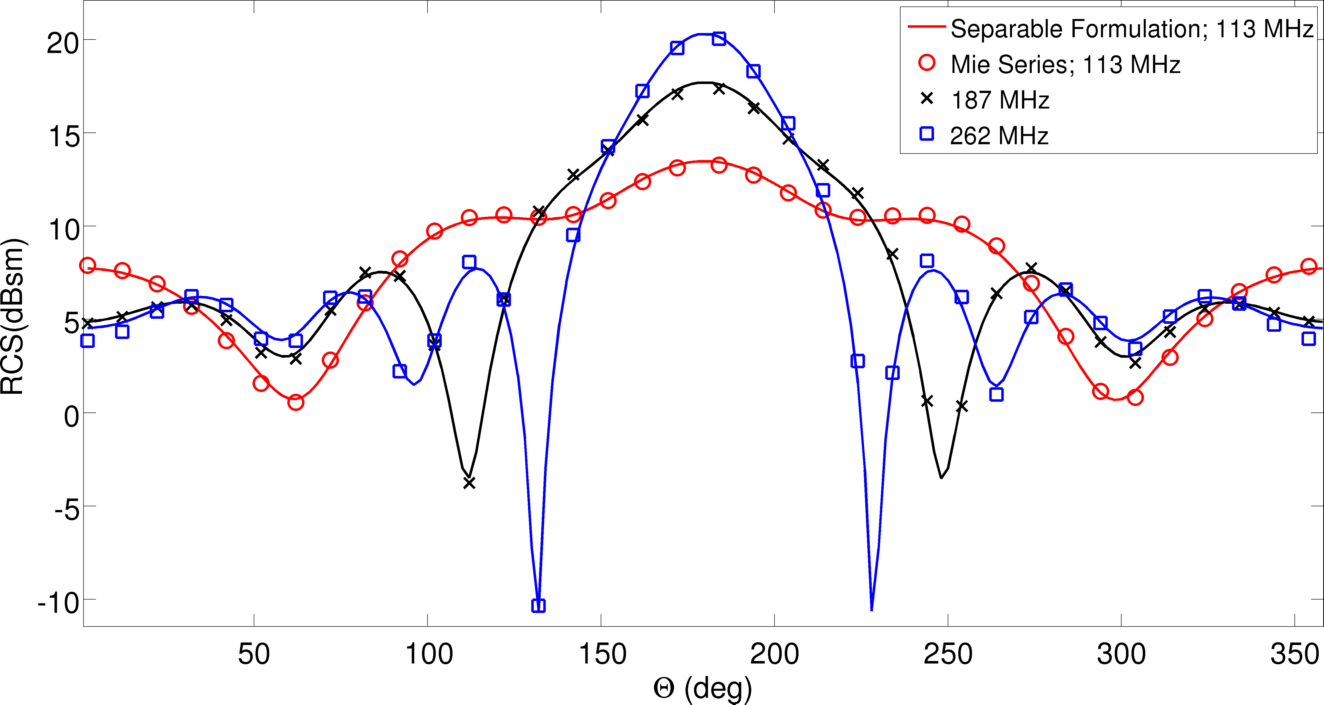}
  }
  \caption{Validation results for scattering from a sphere}
\end{figure}
\begin{figure}[h!]
  \centering
  \subfloat[Current on nacelle surface, for early time, comparison against standard MOT scheme\label{fig:nacellev5}]{
  \includegraphics[width=0.45\textwidth,keepaspectratio=true]{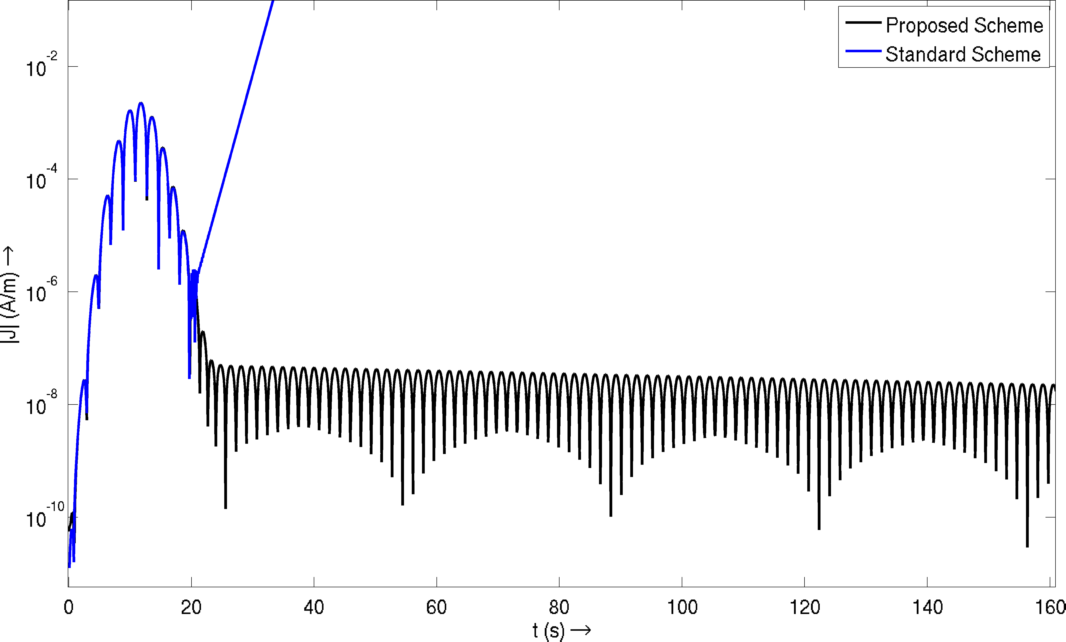}
  }
  \subfloat[Current on nacelle surface for long run time showing late time stability of proposed scheme\label{fig:nacelle}]{
  \includegraphics[width=0.45\textwidth,keepaspectratio=true]{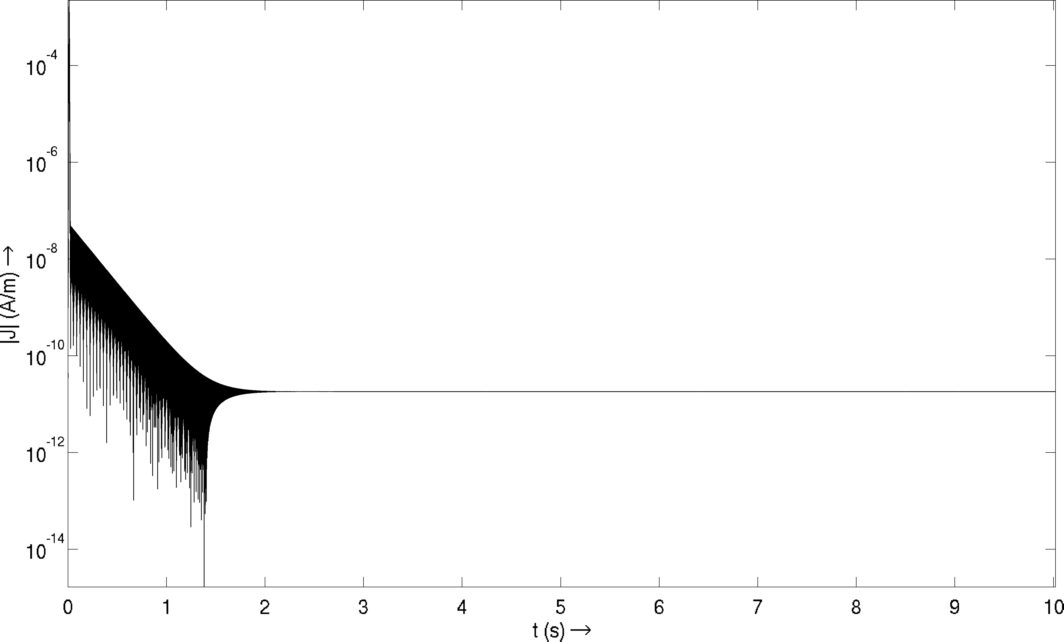}
  }
  \caption{Construction of stable solutions on a nacelle geometry}
\end{figure}

Finally we present a result that has proven challenging to stabilize using traditional MOT based schemes. We compute scattering from a nacelle geometry using the proposed scheme. The nacelle is constructed as a cylindrical shell with one end closed. The outer radius of the cylinder is $160cm$ and the inner radius is $100cm$. The length of the cylinder is $300cm$. To be realistic, the edges of the cylinder are filleted at a $20^o$ angle. The incident field is chosen to have the same functional dependence as in \eqref{eq:einc} with $f_0 = 1GHz$ and $B = 0.99GHz$. The incident field is polarized along $\hat{\theta}$ and directed along $\theta = 90^o$ and $\phi = 5^o$.  The maximum $N$ chosen across the geometry is $7$ and the minimum is $2$. The simulation is run for $100,000$ time steps, corresponding to $753$ transits.  While the stability of the scheme is proven for $\alpha > 0$, we have chosen $\alpha = 0$ in this case to illustrate the robustness of the approximation in terms of providing stable solutions for the more challenging case of the electric field integral equation. The early time data in Figure \ref{fig:nacellev5} shows the current on the surface of a nacelle geometry computed using the proposed scheme compared against the scheme developed in \cite{Shanker2009}. The figure not only validates the method but clearly indicates the instability of the standard MOT schemes.  Late time data in Figure \ref{fig:nacelle} demonstrates the capability of the scheme developed in the paper to generate stable solutions for extremely long simulation times ($753$ transits) for problems of scattering from challenging geometries.   
\section{Conclusions\label{sec:concl}}
In this work, we have presented a technique to numerically construct a stable solutions to time domain integral equations for electromagnetic scattering from perfectly conducting structures. We have derived a bound on the truncation error and provided a mechanism for numerically estimating the error. The technique has been numerically validated on a canonical structure using higher order tessellations \cite{Pray2012a,Pray2012b} and shown to be applicable to generate stable solutions using the EFIE for more challenging structures that defy stabilization via traditional methods 

\begin{thebibliography}{10}

\bibitem{Bachelot1995}
{\sc A.~Bachelot and V.~Lange}, {\em Time dependent integral method for
  maxwell's system}, in Third International Conference on Mathematical and
  Numerical Aspects of Wave Propagation, 1995.

\bibitem{Bachelot1995a}
{\sc A.~Bachelot and V.~Lange}, {\em Time dependent integral method for
  maxwell's system with impedance boundary condition}, Transactions on
  Modelling and Simulation, 9 (1995), p.~8.

\bibitem{Graglia1997}
{\sc R.~D. Graglia, D.~R. Wilton, and A.~F. Peterson}, {\em Higher order
  interpolatory vector bases for computational electromagnetics}, IEEE
  Transactions on Antennas and Propagation, 45 (1997), pp.~329--342.

\bibitem{Ha-Duong1987}
{\sc T.~Ha-Duong}, {\em Boundary Elements Methods, vol. IX}, WIT press, 1987,
  ch.~A mathematical analysis of boundary integral equations in the scattering
  problems of transient waves. In, pp.~101--114.

\bibitem{Ha-Duong2003}
{\sc T.~Ha-Duong}, {\em On retarded potential boundary integral equations and
  their discretisation}, in Topics in Computational Wave Propagation,
  M.~Ainsworth, P.~Davies, D.~Duncan, B.~Rynne, and P.~Martin, eds., vol.~31 of
  Lecture Notes in Computational Science and Engineering, Springer Berlin
  Heidelberg, 2003, pp.~301--336.

\bibitem{Ha-Duong2003a}
{\sc T.~Ha-Duong, B.~Ludwig, and I.~Terrasse}, {\em A galerkin bem for
  transient acoustic scattering by an absorbing obstacle}, International
  Journal for Numerical Methods in Engineering, 57 (2003), pp.~1845--1882.

\bibitem{Hsiao1997}
{\sc G.~Hsiao and R.~Kleinman}, {\em Mathematical foundations for error
  estimation in numerical solutions of integral equations in electromagnetics},
  Antennas and Propagation, IEEE Transactions on, 45 (1997), pp.~316--328.

\bibitem{Nedelec1980}
{\sc J.~Nedelec}, {\em Mixed finite elements in {${\mathbb R}^3$}}, Numer.
  Math,  (1980), pp.~93--315.

\bibitem{Pray2012}
{\sc A.~J. Pray, N.~V. Nair, and B.~Shanker}, {\em Stability properties of the
  time domain electric field integral equation using a separable approximation
  for the convolution with the retarded potential}, IEEE Transactions on
  Antennas and Propagation, 60 (2012), pp.~3772--3781.

\bibitem{Pray2012a}
\leavevmode\vrule height 2pt depth -1.6pt width 23pt, {\em A stable higher
  order method for the time domain electric field integral equation}, in
  Proceedings of the 2012 Applied Computational Elecomtromagnetics Society
  Conference, 2012.

\bibitem{Pray2012b}
\leavevmode\vrule height 2pt depth -1.6pt width 23pt, {\em A stable higher
  order tdie solver using a separable approximation for convolution with the
  retarded potential}, in Proceedings of the IEEE International Symposium on
  Antennas and Propagation, 2012.

\bibitem{Pujols1991}
{\sc A.~Pujols}, {\em TIME DEPENDENT INTEGRAL EQUATION FOR MAXWELL'S SYSTEM.
  APPLICATION OF RADAR CROSS SECTION}, PhD thesis, Universit Bordeaux, 1991.

\bibitem{Rao1982}
{\sc S.~M. Rao, D.~R. Wilton, and A.~W. Glisson}, {\em Electromagnetic
  scattering by surfaces of arbitrary shape}, IEEE Transactions on Antennas and
  Propagation, 30 (1982), pp.~408--418.

\bibitem{Shanker2009}
{\sc B.~Shanker, M.~Lu, J.~Yuan, and E.~Michielssen}, {\em Time domain integral
  equation analysis of scattering from composite bodies via exact evaluation of
  radiation fields}, Antennas and Propagation, IEEE Transactions on, 57 (2009),
  pp.~1506--1520.

\bibitem{Shi2011}
{\sc Y.~Shi, M.-Y. Xia, R.-S. Chen, E.~Michielssen, and M.~Lu}, {\em Stable
  electric field tdie solvers via quasi-exact evaluation of mot matrix
  elements}, Antennas and Propagation, IEEE Transactions on, 59 (2011), pp.~574
  --585.

\bibitem{Stratton1941}
{\sc J.~A. Stratton}, {\em Electromagnetic theory}, International series in
  physics, McGraw-Hill, New York, 1941.

\bibitem{Terrasse1993}
{\sc I.~Terrasse}, {\em R{\'e}solution math{\'e}matique et num{'e}rique des
  {\'e}quations de Maxwell instationnaires par une m{\'e}thode de potentiels
  retard{\'e}s}, PhD thesis, Ecole Polytechnique, 1993.

\end{thebibliography}

\end{document}